\PassOptionsToPackage{dvipsnames}{xcolor}
\documentclass[copyright,creativecommons]{eptcs}

\bibliographystyle{eptcs}

\usepackage{booktabs}   
\usepackage{subcaption} 
\usepackage[utf8]{inputenc}
\usepackage[english]{babel}
\usepackage[T1]{fontenc}
\usepackage{amsmath}
\usepackage{hyperref}
\usepackage{tikz}
\usepackage{amsthm}
\usepackage{amssymb}
\usepackage{stmaryrd}


 

\newtheorem{theorem}{Theorem}
\newtheorem{lemma}[theorem]{Lemma}
\newtheorem{proposition}[theorem]{Proposition}

\newtheorem{corollary}[theorem]{Corollary}

\newtheorem{definition}[theorem]{Definition}
\newtheorem{assumption}[theorem]{Assumption}
\newtheorem{example}[theorem]{Example}
\newtheorem{nonexample}[theorem]{Non-Example}

\newtheorem{remark}[theorem]{Remark}


\newcommand{\cupp}[1]{\ensuremath{\overset{\smallsmile}{#1}}}%

%

%

\newcommand{\id}{\text{id}}
\newcommand{\Id}{\text{Id}}

\newcommand{\op}{\ensuremath{\mathrm{op}}}

\newcommand{\VV}{\ensuremath{\mathbf{V}}}

\newcommand{\BBE}{\ensuremath{\mathcal{B}}}

\newcommand{\CC}{\ensuremath{\mathbf{C}}}

\newcommand{\CCE}{\ensuremath{\mathcal{C}}}

\newcommand{\EE}{\ensuremath{\mathcal{E}}}
\newcommand{\FE}{\ensuremath{\mathcal{F}}}
\newcommand{\GE}{\ensuremath{\mathcal{G}}}
\newcommand{\TE}{\ensuremath{\mathcal{T}}}
\newcommand{\HE}{\ensuremath{\mathcal{H}}}
\newcommand{\KE}{\ensuremath{\mathcal{K}}}

\newcommand{\DD}{\ensuremath{\mathbf{D}}}
\newcommand{\DDE}{\ensuremath{\mathcal{D}}}

\newcommand{\DCPO}{\ensuremath{\mathbf{CPO}}}

\newcommand{\dcpo}{\DCPO}
\newcommand{\dcpobs}{\ensuremath{\mathbf{CPO}_{\perp!}}}

\newcommand{\cpobs}{\dcpobs}
\newcommand{\cpo}{\dcpo}

\newcommand{\M}{\ensuremath{\mathbf{M}}}

\newcommand{\Set}{\ensuremath{\mathbf{Set}}}

\newcommand{\Hilb}{\ensuremath{\mathbf{Hilb}_{\lambda}^{\leq 1}}}




\newcommand{\lrb}[1]{{\llbracket #1 \rrbracket}}






\usetikzlibrary{decorations.pathmorphing}
\usetikzlibrary{decorations.markings}
\usetikzlibrary{decorations.pathreplacing}
\usetikzlibrary{arrows}
\usetikzlibrary{shapes}

\pgfdeclarelayer{edgelayer}
\pgfdeclarelayer{nodelayer}
\pgfsetlayers{edgelayer,nodelayer,main}

\tikzstyle{braceedge}=[decorate,decoration={brace,amplitude=10pt}]
\tikzstyle{square box}=[rectangle,fill=white,draw=black,minimum height=6mm,minimum width=6mm,yshift=0.7mm]
\tikzstyle{wire label}=[font=\footnotesize, auto,swap]

\tikzstyle{none}=[inner sep=0pt]
\tikzstyle{gn}=[circle,fill=Lime,draw=Black,line width=0.8 pt]
\tikzstyle{rn}=[circle,fill=Red,draw=Black, line width=0.8 pt]
\tikzstyle{H}=[rectangle,fill=Yellow,draw=Black]
\tikzstyle{line}=[scalar,fill=White,draw=Black]
\tikzstyle{io}=[rectangle,fill=White,draw=Black]
\tikzstyle{block}=[rectangle,fill=Orange,draw=Black]
\tikzstyle{graph}=[circle,fill=White,draw=Black]
\tikzstyle{empty}=[rectangle,fill=none,draw=none]
\tikzstyle{scaled}=[rectangle,fill=none,draw=none, font=\small]
\tikzstyle{box}=[rectangle,fill=White,draw=Black]
\tikzstyle{dot}=[circle,fill=Black,draw=Black,inner sep=0pt,minimum size=1pt]
\tikzstyle{small dot}=[circle,fill=Black,draw=Black,inner sep=0pt,minimum size=1pt]
\tikzstyle{Dot}=[circle,fill=Black,draw=Black,inner sep=0pt,minimum size=3pt]
\tikzstyle{diam}=[rectangle,fill=Black,draw,yscale=1.2,rotate=45]
\tikzstyle{gangle}=[rectangle,fill=Lime,draw=Black]
\tikzstyle{rangle}=[rectangle,fill=Red,draw=Black]
\tikzstyle{circ}=[circle,fill=none,draw=Black,scale=1.3]
\tikzstyle{ellip}=[ellipse,fill=none,draw=Black,scale=1.3,minimum width =1.3cm]
\tikzstyle{ellip2}=[ellipse,fill=White,draw=Black,scale=1.3,minimum width =3cm]
\tikzstyle{bbox}=[rectangle,fill=Blue,draw=Blue,scale=0.6]
\tikzstyle{gg}=[shape=rectangle,fill=White,draw=Black,dashed]

\tikzstyle{white circle}=[circle,fill=none,draw=Black,scale=1]
\tikzstyle{black circle}=[circle,fill=Black,draw=Black,scale=1]
\tikzstyle{grey circle}=[circle,fill=Gray,draw=Black,scale=1]
\tikzstyle{white rectangle}=[rectangle,fill=none,draw=Black,scale=1]

\tikzstyle{nodev}=[circle,fill=none,draw=Black,scale=1]
\tikzstyle{greynode}=[circle,fill=Grey,draw=Black,scale=1]
\tikzstyle{blacknode}=[circle,fill=Black,draw=Black,scale=1]
\tikzstyle{wirev}=[circle,fill=Black,draw=Black,inner sep=0pt,minimum size=3pt]
\tikzstyle{wirevred}=[circle,fill=Red,draw=Black,inner sep=0pt,minimum size=3pt]

\tikzstyle{simple}=[-,draw=Black]
\tikzstyle{to}=[->,draw=Black]
\tikzstyle{naturalto}=[-{Implies},double distance=1.5pt]
\tikzstyle{bdirected}=[<->,draw=Black]
\tikzstyle{bothdirs}=[bdirected,draw=Black]
\tikzstyle{bothdirsred}=[bdirected,draw=Red]
\tikzstyle{blue}=[-,draw=Blue]
\tikzstyle{redd}=[directed,draw=Red]
\tikzstyle{redu}=[-,draw=Red]
\tikzstyle{blued}=[directed,draw=Blue]
\tikzstyle{dash}=[dashed,draw=Black]
\tikzstyle{ddash}=[->,dashed,draw=Black]
\tikzstyle{dashedd}=[->,dashed]
\tikzstyle{dashedred}=[dashed,draw=Red]

\tikzstyle{equal-arrow}=[double equal sign distance]

\tikzstyle{dotpic}=[scale=0.5]

\tikzstyle{every picture}=[baseline=-0.25em]



\newcommand{
\InputIfFileExists{}{}{
\usetikzlibrary{decorations.pathmorphing}
\usetikzlibrary{decorations.markings}
\usetikzlibrary{decorations.pathreplacing}
\usetikzlibrary{arrows}
\usetikzlibrary{shapes}

\pgfdeclarelayer{edgelayer}
\pgfdeclarelayer{nodelayer}
\pgfsetlayers{edgelayer,nodelayer,main}

\tikzstyle{braceedge}=[decorate,decoration={brace,amplitude=10pt}]
\tikzstyle{square box}=[rectangle,fill=white,draw=black,minimum height=6mm,minimum width=6mm,yshift=0.7mm]
\tikzstyle{wire label}=[font=\footnotesize, auto,swap]

\tikzstyle{none}=[inner sep=0pt]
\tikzstyle{gn}=[circle,fill=Lime,draw=Black,line width=0.8 pt]
\tikzstyle{rn}=[circle,fill=Red,draw=Black, line width=0.8 pt]
\tikzstyle{H}=[rectangle,fill=Yellow,draw=Black]
\tikzstyle{line}=[scalar,fill=White,draw=Black]
\tikzstyle{io}=[rectangle,fill=White,draw=Black]
\tikzstyle{block}=[rectangle,fill=Orange,draw=Black]
\tikzstyle{graph}=[circle,fill=White,draw=Black]
\tikzstyle{empty}=[rectangle,fill=none,draw=none]
\tikzstyle{scaled}=[rectangle,fill=none,draw=none, font=\small]
\tikzstyle{box}=[rectangle,fill=White,draw=Black]
\tikzstyle{dot}=[circle,fill=Black,draw=Black,inner sep=0pt,minimum size=1pt]
\tikzstyle{small dot}=[circle,fill=Black,draw=Black,inner sep=0pt,minimum size=1pt]
\tikzstyle{Dot}=[circle,fill=Black,draw=Black,inner sep=0pt,minimum size=3pt]
\tikzstyle{diam}=[rectangle,fill=Black,draw,yscale=1.2,rotate=45]
\tikzstyle{gangle}=[rectangle,fill=Lime,draw=Black]
\tikzstyle{rangle}=[rectangle,fill=Red,draw=Black]
\tikzstyle{circ}=[circle,fill=none,draw=Black,scale=1.3]
\tikzstyle{ellip}=[ellipse,fill=none,draw=Black,scale=1.3,minimum width =1.3cm]
\tikzstyle{ellip2}=[ellipse,fill=White,draw=Black,scale=1.3,minimum width =3cm]
\tikzstyle{bbox}=[rectangle,fill=Blue,draw=Blue,scale=0.6]
\tikzstyle{gg}=[shape=rectangle,fill=White,draw=Black,dashed]

\tikzstyle{white circle}=[circle,fill=none,draw=Black,scale=1]
\tikzstyle{black circle}=[circle,fill=Black,draw=Black,scale=1]
\tikzstyle{grey circle}=[circle,fill=Gray,draw=Black,scale=1]
\tikzstyle{white rectangle}=[rectangle,fill=none,draw=Black,scale=1]

\tikzstyle{nodev}=[circle,fill=none,draw=Black,scale=1]
\tikzstyle{greynode}=[circle,fill=Grey,draw=Black,scale=1]
\tikzstyle{blacknode}=[circle,fill=Black,draw=Black,scale=1]
\tikzstyle{wirev}=[circle,fill=Black,draw=Black,inner sep=0pt,minimum size=3pt]
\tikzstyle{wirevred}=[circle,fill=Red,draw=Black,inner sep=0pt,minimum size=3pt]

\tikzstyle{simple}=[-,draw=Black]
\tikzstyle{to}=[->,draw=Black]
\tikzstyle{naturalto}=[-{Implies},double distance=1.5pt]
\tikzstyle{bdirected}=[<->,draw=Black]
\tikzstyle{bothdirs}=[bdirected,draw=Black]
\tikzstyle{bothdirsred}=[bdirected,draw=Red]
\tikzstyle{blue}=[-,draw=Blue]
\tikzstyle{redd}=[directed,draw=Red]
\tikzstyle{redu}=[-,draw=Red]
\tikzstyle{blued}=[directed,draw=Blue]
\tikzstyle{dash}=[dashed,draw=Black]
\tikzstyle{ddash}=[->,dashed,draw=Black]
\tikzstyle{dashedd}=[->,dashed]
\tikzstyle{dashedred}=[dashed,draw=Red]

\tikzstyle{equal-arrow}=[double equal sign distance]

\tikzstyle{dotpic}=[scale=0.5]

\tikzstyle{every picture}=[baseline=-0.25em]



\newcommand{\stikz}[2][1]{\scalebox{#1}{\tikzfig{#2}}}
\newcommand{\cstikz}[2][1]{\begin{center}\stikz[#1]{#2}\end{center}}
}
}[1]{
\InputIfFileExists{#1}{}{\input{./tikz/#1}}
}

\newcommand{\stikz}[2][1]{\scalebox{#1}{
\InputIfFileExists{#2}{}{\input{./tikz/#2}}
}}
\newcommand{\cstikz}[2][1]{\begin{center}\stikz[#1]{#2}\end{center}}

\title{Reflecting Algebraically Compact Functors}
\date{}
\author{Vladimir Zamdzhiev
  \institute{Universit\'e de Lorraine, CNRS, Inria, LORIA, F 54000 Nancy, France}
}

\begin{document}
\maketitle

\begin{abstract}
A \emph{compact} $T$-algebra is an initial $T$-algebra whose inverse is a
final $T$-coalgebra. Functors with this property are said to be
\emph{algebraically compact}. This is a very strong property used in
programming semantics which allows one to interpret recursive datatypes
involving mixed-variance functors, such as function space. The construction of
compact algebras is usually done in categories with a zero object where some form
of a limit-colimit coincidence exists. In this paper we consider a more
abstract approach and show how one can construct compact algebras in categories
which have neither a zero object, nor a (standard) limit-colimit coincidence
by reflecting the compact algebras from categories which have both.  In doing so,
we provide a \emph{constructive} description of a large class of algebraically
compact functors (satisfying a compositionality principle) and show our methods
compare quite favorably to other approaches from the literature.
\end{abstract}

\section{Introduction}\label{sec:intro}

\emph{Inductive datatypes} for programming languages can be used to represent
important data structures such as lists, trees, natural numbers and many
others. When providing a denotational interpretation for such languages, type
expressions correspond to functors and one has to be able to construct their
initial algebras in order to model inductive datatypes~\cite{lehman-smyth}. If
the admissible datatype expressions allow only pairing and sum types, then the
functors induced by these expressions are all polynomial functors, i.e.,
functors constructed using only coproducts and (tensor) product connectives,
and the required initial algebra may usually be constructed using Adámek's
celebrated theorem~\cite{adamek-original}.

However, if one also allows function types as part of the admissible datatype
expressions, then we talk about \emph{recursive datatypes} and their
denotational interpretation requires additional structure. A solution advocated
by Freyd~\cite{freyd} and Fiore and Plotkin~\cite{fiore-plotkin} is based on
\emph{algebraically compact functors}, i.e., functors $F$ which have an initial
$F$-algebra whose inverse is a final $F$-coalgebra. $F$-algebras with this
property are called \emph{compact} within this paper.

The celebrated limit-colimit coincidence theorem~\cite{smyth-plotkin} and other similar theorems are
usually used for the construction of compact algebras with starting point a zero
object of the category where the language is interpreted.  However, if one is
interested in semantics for mixed linear/non-linear lambda calculi, then it
becomes necessary to also solve recursive domain equations within categories
that do not have a zero object.

In this paper, we demonstrate how one can construct compact algebras in categories
which do not have a zero object and we do so without (explicitly) assuming the
existence of any limits or colimits whatsoever. Our methods are based on
\emph{enriched} category theory and we show how this allows us to reflect compact
algebras from categories with strong algebraic compactness properties into
categories without such properties. The results which we present are also
compositional and this allows us to provide constructive descriptions of
large classes of algebraically compact functors using formal grammars.

\section{A Reflection Theorem for Algebraically Compact Functors}\label{sec:reflect}

In this section we show how initial algebras, final coalgebras and compact
algebras may be reflected.

\begin{definition}
Given an endofunctor $T: \CC \to \CC$, a \emph{$T$-algebra} is a pair $(A, a),$
where $A$ is an object of $\CC$ and $TA \xrightarrow{a} A$ is a morphism of $\CC$.
A $T$-algebra morphism $f : (A, a) \to (B, b)$ is a morphism $f: A \to B$
of $\CC$, such that the following diagram:
\cstikz{t-algebra.tikz}
commutes. The dual notion is called a $T$-coalgebra.
\end{definition}

Obviously, $T$-(co)algebras form a category. A $T$-(co)algebra is initial (final) if it
is initial (final) in that category.

\begin{definition}
An endofunctor $T: \CC \to \CC$ is (1) \emph{algebraically complete} if it has
an initial $T$-algebra (2) \emph{algebraically cocomplete} if it has a final
$T$-coalgebra and (3) \emph{algebraically compact} if it has an initial
$T$-algebra $T \Omega \xrightarrow \omega \Omega,$ such that $T \Omega
\xleftarrow{ \omega^{-1} } \Omega$ is a final $T$-coalgebra.
\end{definition}

Next, we recall a lemma first observed by Peter Freyd.

\begin{lemma}[{\cite[pp. 100]{freyd}}]
Let $\CC$ and $\DD$ be categories and $F: \CC \to \DD$ and $G : \DD \to \CC$ functors. If $GF \Omega \xrightarrow{ \omega } \Omega$ is an initial $GF$-algebra, then $FGF \Omega \xrightarrow{F \omega} F \Omega$ is an initial $FG$-algebra.
\end{lemma}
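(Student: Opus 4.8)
The plan is to deduce initiality of $(F\Omega, F\omega)$ among $FG$-algebras directly from initiality of $(\Omega,\omega)$ among $GF$-algebras, by setting up a bijection at the level of algebra morphisms. First observe that since $GF : \CC \to \CC$ and $FG : \DD \to \DD$, we have $FGF\Omega = (FG)(F\Omega)$, so $F\omega : (FG)(F\Omega) \to F\Omega$ genuinely is an $FG$-algebra structure on $F\Omega$; and for any $FG$-algebra $(B,b)$ the morphism $Gb : (GF)(GB) \to GB$ makes $(GB, Gb)$ a $GF$-algebra. The goal is then: for every $FG$-algebra $(B,b)$, the set of $FG$-algebra morphisms $(F\Omega, F\omega) \to (B, b)$ is in bijection with the set of $GF$-algebra morphisms $(\Omega, \omega) \to (GB, Gb)$, the latter being a singleton by hypothesis.

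I would define the two directions of the bijection as the evident ``transposes''. To an $FG$-algebra morphism $k : (F\Omega, F\omega) \to (B,b)$ assign $Gk \circ \omega^{-1} : \Omega \to GB$; here I use that $\omega$ is an isomorphism (Lambek's lemma, since $(\Omega,\omega)$ is an initial algebra). Conversely, to a $GF$-algebra morphism $h : (\Omega,\omega) \to (GB, Gb)$ assign $b \circ Fh : F\Omega \to B$. Two short diagram chases, using only functoriality and the respective algebra-morphism squares, confirm that these assignments land in the claimed hom-sets; for instance $(b \circ Fh) \circ F\omega = b \circ F(h\circ\omega) = b \circ F(Gb \circ GFh) = b \circ FG(b \circ Fh)$, which is exactly the $FG$-algebra-morphism condition for $b \circ Fh$, and the other verification is analogous.

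Then I would check the two assignments are mutually inverse. One round trip is $b \circ F(Gk \circ \omega^{-1}) = b \circ FGk \circ F\omega^{-1} = (k \circ F\omega)\circ F\omega^{-1} = k$, using the algebra-morphism square $k \circ F\omega = b \circ FGk$; the other is $G(b \circ Fh)\circ \omega^{-1} = (Gb \circ GFh)\circ\omega^{-1} = (h\circ\omega)\circ\omega^{-1} = h$, using $h \circ \omega = Gb \circ GFh$. This yields the bijection (in fact natural in $(B,b)$), and since $(\Omega,\omega)$ is an initial $GF$-algebra the right-hand side is always a one-element set; hence so is the left-hand side, and therefore $(F\Omega, F\omega)$ is an initial $FG$-algebra.

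The one genuine subtlety — the step I expect to be the main obstacle — is that $F$ and $G$ are arbitrary functors, not an adjoint pair, so morphisms cannot be transposed via a unit or counit in the usual way. The resolution is precisely Lambek's lemma: because $(\Omega,\omega)$ is an initial algebra, $\omega$ is invertible, and $\omega^{-1}$ takes over the role that a counit would otherwise play; everything else reduces to routine functoriality bookkeeping. Finally, the dual statement — obtained by replacing initial $T$-algebras with final $T$-coalgebras and reversing all arrows — holds by the same argument.
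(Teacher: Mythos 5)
Your proof is correct, and it is essentially the standard argument behind the cited result: the paper itself gives no proof of this lemma (it simply quotes Freyd), and your transposition $k \mapsto Gk \circ \omega^{-1}$, $h \mapsto b \circ Fh$, with Lambek's lemma supplying $\omega^{-1}$, is exactly the usual way it is established, merely packaged as a hom-set bijection rather than as separate existence and uniqueness steps. The only remark worth adding is that the appeal to Lambek's lemma can even be avoided (or rather inlined): writing $u : \Omega \to GF\Omega$ for the unique $GF$-algebra morphism into $(GF\Omega, GF\omega)$, one has $\omega \circ u = \id_\Omega$ by initiality, and uniqueness of $b \circ Fh$ follows from $b \circ Fh = b \circ FGk \circ Fu = k \circ F\omega \circ Fu = k$ — but your version is perfectly sound as written.
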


By dualising the above lemma, we obtain the next one.

\begin{lemma}\label{lem:coalgebras}
Let $\CC$ and $\DD$ be categories and $F: \CC \to \DD$ and $G : \DD \to \CC$ functors. If $GF \Omega \xleftarrow{ \omega } \Omega$ is a final $GF$-coalgebra, then $FGF \Omega \xleftarrow{F \omega} F \Omega$ is a final $FG$-coalgebra.
\end{lemma}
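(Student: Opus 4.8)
The plan is to obtain this statement from the preceding lemma (Freyd's) by passing to opposite categories, exactly as the phrase ``by dualising'' suggests. First I would record the two elementary dualities that make this work: for an endofunctor $H : \CC \to \CC$, an $H$-coalgebra $X \to HX$ is precisely an $H^{\op}$-algebra $H^{\op} X \to X$ in $\CC^{\op}$, and this correspondence is an isomorphism of categories that sends final $H$-coalgebras to initial $H^{\op}$-algebras; and, for composable functors, $(GF)^{\op} = G^{\op} F^{\op}$ and $(FG)^{\op} = F^{\op} G^{\op}$, with $F^{\op}$ (resp.\ $G^{\op}$) acting on objects and morphisms just as $F$ (resp.\ $G$) does, only with the directions of morphisms reversed.

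Then I would apply the previous lemma verbatim, but with $\CC, \DD, F, G$ replaced by $\CC^{\op}, \DD^{\op}, F^{\op} : \CC^{\op} \to \DD^{\op}$, and $G^{\op} : \DD^{\op} \to \CC^{\op}$. By the first duality, the hypothesis that $GF\Omega \xleftarrow{\omega} \Omega$ is a final $GF$-coalgebra in $\CC$ says exactly that $G^{\op}F^{\op}\Omega \xrightarrow{\omega} \Omega$ is an initial $G^{\op}F^{\op}$-algebra in $\CC^{\op}$. The previous lemma then yields that $F^{\op}G^{\op}F^{\op}\Omega \xrightarrow{F^{\op}\omega} F^{\op}\Omega$ is an initial $F^{\op}G^{\op}$-algebra in $\DD^{\op}$. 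Translating this back along the duality --- using that $F^{\op}\omega$ is the morphism $F\omega$ with its direction reversed, and that $F^{\op}G^{\op}F^{\op}\Omega = FGF\Omega$ on objects --- gives precisely that $FGF\Omega \xleftarrow{F\omega} F\Omega$ is a final $FG$-coalgebra in $\DD$, which is the claim.

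I do not expect any real obstacle here: the content of the lemma is entirely carried by the preceding one, and the only thing to be careful about is bookkeeping --- making sure the $\op$'s land on the right functors (so that $FG$, and not $GF$, appears in the conclusion) and that ``initial algebra / final coalgebra'' is dualised consistently with ``$\CC$ / $\CC^{\op}$''. An alternative, if one preferred a self-contained argument, would be to mimic Freyd's original proof directly: given an arbitrary $FG$-coalgebra $(B, \beta : B \to FGB)$, apply $G$ and compose appropriately to build a $GF$-coalgebra, use finality of $\Omega$ to get a unique mediating map, and then transport it back through $F$; but verifying existence and uniqueness there is just the dual of Freyd's computation, so the opposite-category route is cleaner.
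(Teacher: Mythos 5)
Your proposal is correct and matches the paper exactly: the paper gives no separate argument, obtaining the lemma "by dualising" Freyd's lemma, which is precisely the opposite-category translation you carry out (with the right bookkeeping of $(GF)^{\op}=G^{\op}F^{\op}$ and final coalgebras becoming initial algebras). Nothing further is needed.
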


By using the two lemmas above, the next theorem follows immediately.

\begin{theorem}\label{thm:reflect}
Let $\CC$ and $\DD$ be categories and $F: \CC \to \DD$ and $G : \DD \to \CC$ functors. Then
$FG$ is algebraically complete/cocomplete/compact iff $GF$ is algebraically complete/cocomplete/compact, respectively.
\end{theorem}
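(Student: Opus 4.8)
The plan is to deduce the theorem directly from Freyd's lemma and its dual, Lemma~\ref{lem:coalgebras}, handling the three properties in turn and exploiting the evident symmetry between the pairs $(F,G)$ and $(G,F)$ (equivalently, between $\CC$ and $\DD$).

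First I would treat algebraic completeness. For one direction, assume $GF$ is algebraically complete, so it admits an initial $GF$-algebra $GF\Omega \xrightarrow{\omega} \Omega$; Freyd's lemma then produces the initial $FG$-algebra $FGF\Omega \xrightarrow{F\omega} F\Omega$, so $FG$ is algebraically complete. The converse is the same statement with $F$ and $G$ interchanged: an initial $FG$-algebra $FG\Omega' \xrightarrow{\omega'} \Omega'$ yields, again by Freyd's lemma, the initial $GF$-algebra $GFG\Omega' \xrightarrow{G\omega'} G\Omega'$. Algebraic cocompleteness is handled identically, with Lemma~\ref{lem:coalgebras} replacing Freyd's lemma throughout.

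The only case needing a small extra observation is algebraic compactness. Suppose $GF$ is algebraically compact, with initial $GF$-algebra $GF\Omega \xrightarrow{\omega} \Omega$ such that $GF\Omega \xleftarrow{\omega^{-1}} \Omega$ is a final $GF$-coalgebra. In particular $\omega$ is an isomorphism, hence so is $F\omega$, and $(F\omega)^{-1} = F(\omega^{-1})$ because functors preserve inverses. Freyd's lemma makes $FGF\Omega \xrightarrow{F\omega} F\Omega$ an initial $FG$-algebra, and Lemma~\ref{lem:coalgebras} applied to $GF\Omega \xleftarrow{\omega^{-1}} \Omega$ makes $FGF\Omega \xleftarrow{F(\omega^{-1})} F\Omega$ a final $FG$-coalgebra. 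Since $F(\omega^{-1}) = (F\omega)^{-1}$, the inverse of the initial $FG$-algebra $F\omega$ is a final $FG$-coalgebra, so $FG$ is algebraically compact. The converse follows by swapping $F$ and $G$ once more.

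I expect no real obstacle: each implication is a one-line invocation of a lemma already established, and the biconditional is automatic from the $(F,G)\leftrightarrow(G,F)$ symmetry. The single point worth stating explicitly --- and the only place a careless argument could slip --- is, in the compact case, that the initial algebra and the final coalgebra delivered by the two lemmas share the carrier $F\Omega$ and that the structure map of the latter is literally the inverse of that of the former; both of these are consequences of functoriality, i.e.\ of the identity $F(\omega^{-1}) = (F\omega)^{-1}$.
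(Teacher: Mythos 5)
Your proof is correct and follows exactly the route the paper intends: the paper derives the theorem "immediately" from Freyd's lemma and Lemma~\ref{lem:coalgebras}, using the $(F,G)\leftrightarrow(G,F)$ symmetry for the biconditional, and your treatment of the compact case via $F(\omega^{-1})=(F\omega)^{-1}$ is just the (correct) spelling-out of that one-line argument.
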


In order to avoid cumbersome repetition, all subsequent results are stated for
algebraic compactness. However, all results  presented in this section and the
next one (excluding Non-Example~\ref{nonexample:not-compact}) also hold true when all instances of "algebraic compactness" are
replaced with "algebraic completeness" or with "algebraic cocompleteness".

\begin{assumption}\label{ass:enrich}
Throughout the rest of the paper we assume we are given an arbitrary cartesian closed category $\VV(1, \times, \to)$ which we will use as the base of enrichment.
$\VV$-categories are written using capital calligraphic letters $(\CCE, \DDE, \ldots)$ and their underlying categories using a corresponding bold capital letter $(\CC, \DD, \ldots)$.
$\VV$-functors are also written with calligraphic letters $(\FE, \GE : \CCE \to \DDE)$ and their underlying functors using a corresponding capital letter $F, G : \CC \to \DD.$
\end{assumption}

\begin{definition}
A $\VV$-endofunctor $\TE : \CCE \to \CCE $ is \emph{algebraically compact} if its underlying endofunctor $T : \CC \to \CC$ is algebraically compact.
\end{definition}

\begin{definition}[{\cite[Definition 5.3]{fiore-plotkin}}]
A $\VV$-category $\CCE$ is \emph{$\VV$-algebraically compact} if every $\VV$-endofunctor $\TE : \CCE \to \CCE$ is algebraically compact.
\end{definition}

In particular, a $\Set$-algebraically compact category is a locally small
category $\CC$, such that every endofunctor $T : \CC \to \CC$ is algebraically
compact. In this case we simply say $\CC$ is algebraically compact.

\begin{example}
Let $\lambda$ be a cardinal and let $\Hilb$ be the category whose objects are the Hilbert spaces with dimension at most $\lambda$ and whose morphisms are the linear maps of norm at most 1. Then $\Hilb$ is algebraically compact~\cite[Theorem 3.2]{barr}.
\end{example}

For the next (very important) example, recall that a \emph{complete partial
order} (cpo) is a poset such that every increasing chain has a supremum. A cpo
is \emph{pointed} if it has a least element. A monotone map $f : X \to Y$
between two cpo's is \emph{Scott-continuous} if it preserves suprema. If, in
addition, $X$ and $Y$ are pointed and $f$ preserves the least element of $X$,
then we say that $f$ is \emph{strict}. We denote with $\cpo$ the category of
cpo's and Scott-continuous functions and we denote with $\cpobs$ the category
of pointed cpo's and strict Scott-continuous functions. The category $\cpo$
is cartesian closed, $\cpobs$ is symmetric monoidal
closed (when equipped with the smash product and strict function space) and both categories are complete and
cocomplete~\cite{abramskyjung:domaintheory}. We will see both categories as
$\cpo$-categories when equipped with the standard pointwise order on functions.

Therefore, a $\cpo$-category ($\cpobs$-category) is simply a category whose homsets have the additional structure of a (pointed) cpo and for which composition is a (strict) Scott-continuous operation in both arguments. A $\cpo$-functor ($\cpobs$-functor) then is simply a functor whose action on hom-cpo's is a (strict) Scott-continuous function.
The notion of a $\cpo$-natural transformation coincides with that of $\cpobs$-natural transformation which also coincides with the ordinary notion. Because of these reasons, it is standard in the programming semantics literature to use the same notation for $\cpo_{(\perp!)}$-enriched categorical notions and their ordinary underlying counterparts. We do the same in this paper.

\begin{example}
The category $\cpobs$ is $\cpo$-algebraically compact~\cite[Corollary 7.2.4]{fiore-thesis}.
\end{example}

Next, we show how to reflect algebraically compact $\VV$-functors.

\begin{definition}\label{def:factor}
We shall say that a $\VV$-endofunctor $\TE : \CCE \to \CCE$ has a \emph{$\VV$-algebraically compact factorisation} if
there exists a $\VV$-algebraically compact category $\DDE$ and $\VV$-functors $\FE : \CCE \to \DDE$ and $\GE : \DDE \to \CCE$ such that $\TE \cong \GE \circ \FE.$
\end{definition}

\begin{theorem}\label{thm:factor}
If a $\VV$-endofunctor $\TE : \CCE \to \CCE$ has a $\VV$-algebraically compact factorisation, then it is algebraically compact.
\end{theorem}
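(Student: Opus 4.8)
The plan is to obtain the compact algebra for $\TE$ by transporting one from the $\VV$-algebraically compact category $\DDE$ across the given factorisation, with Theorem~\ref{thm:reflect} doing the reflection. Fix the data: let $\FE : \CCE \to \DDE$ and $\GE : \DDE \to \CCE$ be the $\VV$-functors with $\TE \cong \GE \circ \FE$, and write $F, G$ for their underlying functors, so that $T \cong G \circ F$ as ordinary endofunctors on $\CC$.

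First I would note that $\FE \circ \GE : \DDE \to \DDE$ is again a $\VV$-endofunctor, since composites of $\VV$-functors are $\VV$-functors. Because $\DDE$ is $\VV$-algebraically compact, \emph{every} $\VV$-endofunctor on it is algebraically compact; in particular so is $\FE \circ \GE$, which by definition means that its underlying endofunctor $F \circ G : \DD \to \DD$ is algebraically compact.

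Next I would apply Theorem~\ref{thm:reflect} to the pair $F : \CC \to \DD$, $G : \DD \to \CC$: since $FG$ is algebraically compact, so is $GF : \CC \to \CC$. Finally, I would invoke the routine fact that algebraic compactness is invariant under natural isomorphism of endofunctors — given $T \cong GF$, an initial $GF$-algebra (together with its inverse, a final $GF$-coalgebra) transports along the natural isomorphism to an initial $T$-algebra whose inverse is a final $T$-coalgebra — to conclude that $T$ is algebraically compact, i.e.\ that $\TE$ is algebraically compact in the sense of the preceding definitions.

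I do not expect a genuine obstacle here; the proof is essentially an immediate consequence of Theorem~\ref{thm:reflect} once the enrichment bookkeeping is in place. The only points requiring a little care are verifying that the composite $\FE \circ \GE$ really inherits the $\VV$-enrichment (so that the hypothesis of $\VV$-algebraic compactness of $\DDE$ genuinely applies to it) and spelling out the transport of a compact algebra along the natural isomorphism $\TE \cong \GE \circ \FE$; both are standard.
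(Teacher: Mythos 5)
Your proof is correct and follows the paper's own argument exactly: form the $\VV$-endofunctor $\FE \circ \GE$ on $\DDE$, use $\VV$-algebraic compactness of $\DDE$ to get compactness of $F \circ G$, reflect to $G \circ F$ via Theorem~\ref{thm:reflect}, and transport along the natural isomorphism $T \cong G \circ F$. No differences worth noting.
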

\begin{proof}
Taking $\DDE, \FE, \GE$ as in Definition~\ref{def:factor}, we get a $\VV$-endofunctor $\FE \circ \GE : \DDE \to \DDE$. Since $\DDE$ is $\VV$-algebraically compact, then its underlying endofunctor $F \circ G : \DD \to \DD$ is algebraically compact.
Theorem~\ref{thm:reflect} shows that $G \circ F : \CC \to \CC$ is algebraically compact. Algebraic compactness is preserved by natural isomorphisms and therefore $T \cong G \circ F$ is also algebraically compact.
\end{proof}

Using the two examples above, we easily get two corollaries.

\begin{corollary}
Any endofunctor $T: \Set \to \Set$ which factors through $\Hilb$ is algebraically compact.
\end{corollary}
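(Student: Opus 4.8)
The plan is to read this as the $\VV = \Set$ instance of Theorem~\ref{thm:factor}. First I would check that $\Set$ qualifies as a base of enrichment in the sense of Assumption~\ref{ass:enrich}: it is cartesian closed, with terminal object a one-element set, categorical product the cartesian product, and internal hom the set of all functions. Under this choice every locally small category is canonically a $\Set$-category, every functor between locally small categories is automatically a $\Set$-functor, and $\Set$-naturality coincides with ordinary naturality; in particular, as already noted in the text, a $\Set$-algebraically compact category is exactly a locally small category all of whose endofunctors are algebraically compact, i.e.\ an algebraically compact category.

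Next I would unpack the hypothesis. To say $T : \Set \to \Set$ factors through $\Hilb$ is to supply functors $F : \Set \to \Hilb$ and $G : \Hilb \to \Set$ with $T \cong G \circ F$. Since $\Hilb$ is locally small, regarding $F$ and $G$ as $\Set$-functors costs nothing, and since $\Hilb$ is algebraically compact by the example above, it is $\Set$-algebraically compact. Hence this data constitutes a $\Set$-algebraically compact factorisation of $T$ in the sense of Definition~\ref{def:factor}, with $\Hilb$ (viewed as a $\Set$-category) playing the role of $\DDE$.

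Finally I would apply Theorem~\ref{thm:factor} to this factorisation, concluding that $T$ is algebraically compact, as claimed.

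I do not anticipate any real obstacle: the mathematical substance lies entirely in Theorem~\ref{thm:factor} and in the cited algebraic compactness of $\Hilb$. The only point deserving a line of care is the remark that $\Set$-enrichment is "free" for locally small categories, so that the enriched apparatus of the previous section specialises cleanly to the ordinary setting; granting that, the corollary is immediate.
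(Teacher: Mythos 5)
Your proposal is correct and follows exactly the route the paper intends: instantiate Theorem~\ref{thm:factor} at $\VV = \Set$, using that $\Set$-enrichment is automatic for locally small categories and that $\Hilb$ is algebraically compact by Barr's theorem. No issues.
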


\begin{corollary}\label{cor:cpo}
Any $\cpo$-endofunctor $T: \cpo \to \cpo$ which factors through $\cpobs$ via a pair of $\cpo$-functors, is algebraically compact. Thus the lifting functor $(-)_\perp : \cpo \to \cpo$ (given by freely adding a least element)
is algebraically compact.
\end{corollary}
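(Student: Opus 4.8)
The plan is to apply Theorem~\ref{thm:factor} with $\VV = \cpo$. For the first claim, suppose $T : \cpo \to \cpo$ factors as $T \cong G \circ F$ where $F : \cpo \to \cpobs$ and $G : \cpobs \to \cpo$ are $\cpo$-functors. Since the preceding example establishes that $\cpobs$ is $\cpo$-algebraically compact, Definition~\ref{def:factor} is satisfied (with $\DDE = \cpobs$), and Theorem~\ref{thm:factor} immediately yields that $T$ is algebraically compact. So the only real content is the second sentence: exhibiting such a factorisation for the lifting functor $(-)_\perp : \cpo \to \cpo$.

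For this, I would take $F : \cpo \to \cpobs$ to be the lifting functor itself viewed with codomain $\cpobs$ — that is, $FX = X_\perp$, the cpo obtained by freely adjoining a bottom element, which is pointed by construction; on morphisms, $Ff : X_\perp \to Y_\perp$ is the evident strict extension of $f$. I would take $G : \cpobs \to \cpo$ to be the forgetful functor that regards a pointed cpo as a plain cpo and a strict Scott-continuous map as a Scott-continuous map. Then $G \circ F = (-)_\perp : \cpo \to \cpo$ on the nose, so in particular $T := (-)_\perp \cong G \circ F$.

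It remains to check that $F$ and $G$ are genuinely $\cpo$-functors, i.e. that their actions on hom-cpos are Scott-continuous. For $G$ this is essentially trivial: the hom-cpo $\cpobs(X,Y)$ of strict maps is a sub-cpo of the hom-cpo $\cpo(X,Y)$ with the same (pointwise) order and suprema, so the inclusion preserves suprema of increasing chains. For $F$, one must verify that the assignment $f \mapsto f_\perp$ from $\cpo(X,Y)$ to $\cpobs(X_\perp, Y_\perp)$ is monotone and preserves suprema of chains; both follow from the pointwise description of the order and of suprema on function cpos, together with the explicit formula for $f_\perp$ (it sends $\perp \mapsto \perp$ and acts as $f$ on the image of $X$). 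These are the routine calculations I would not spell out in full.

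The main obstacle — such as it is — is purely bookkeeping: making sure the free-lifting functor is set up so that $G \circ F$ is literally $(-)_\perp$ and that the $\cpo$-enrichment of $F$ holds, i.e. that lifting acts continuously on hom-cpos. There is no deep difficulty here; the real work was already done in Theorem~\ref{thm:factor} and in the cited fact that $\cpobs$ is $\cpo$-algebraically compact. This corollary is precisely meant to illustrate how cheaply algebraic compactness of a concrete mixed-variance-relevant functor like lifting drops out of the reflection machinery, in a setting ($\cpo$) that has no zero object.
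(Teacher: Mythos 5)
Your proposal is correct and follows exactly the route the paper intends: instantiate Theorem~\ref{thm:factor} with $\DDE = \cpobs$ (which is $\cpo$-algebraically compact by the preceding example), and factor $(-)_\perp$ as the free lifting functor into $\cpobs$ followed by the forgetful functor back to $\cpo$. The routine enrichment checks you defer are indeed all that remains, so no gap.
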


Note that (ordinary) algebraically compact functors are \emph{not} closed under
composition. However, using the additional structure we have introduced, we can
prove the following compositionality result.

\begin{proposition}\label{prop:compose}
Let $\HE : \CCE \to \CCE$ be a $\VV$-endofunctor and $\TE: \CCE \to \CCE$ be a $\VV$-endofunctor with a $\VV$-algebraically compact factorisation. Then $\HE \circ \TE$ also has a $\VV$-algebraically compact factorisation and is thus algebraically compact.
\end{proposition}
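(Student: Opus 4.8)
The plan is to reuse the factorisation of $\TE$ that we are handed and simply absorb $\HE$ into the right-hand functor. By Definition~\ref{def:factor} there is a $\VV$-algebraically compact category $\DDE$ together with $\VV$-functors $\FE : \CCE \to \DDE$ and $\GE : \DDE \to \CCE$ such that $\TE \cong \GE \circ \FE$. Since $\HE : \CCE \to \CCE$ is itself a $\VV$-functor, the composite $\HE \circ \GE : \DDE \to \CCE$ is again a $\VV$-functor, so we have $\VV$-functors $\FE : \CCE \to \DDE$ and $\HE \circ \GE : \DDE \to \CCE$ whose composite is $(\HE \circ \GE) \circ \FE$, with the middle category $\DDE$ still $\VV$-algebraically compact.

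Next I would check that this genuinely exhibits $\HE \circ \TE$ as such a composite up to isomorphism. Starting from a $\VV$-natural isomorphism $\alpha : \TE \xrightarrow{\ \cong\ } \GE \circ \FE$, whiskering on the left by $\HE$ yields $\HE\alpha : \HE \circ \TE \xrightarrow{\ \cong\ } \HE \circ \GE \circ \FE = (\HE \circ \GE) \circ \FE$; because $\HE$ is a $\VV$-functor, $\HE\alpha$ is again a $\VV$-natural isomorphism. Hence $\HE \circ \TE \cong (\HE \circ \GE) \circ \FE$, which means precisely that $\HE \circ \TE$ has a $\VV$-algebraically compact factorisation in the sense of Definition~\ref{def:factor}.

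Finally, I would invoke Theorem~\ref{thm:factor} applied to this factorisation of the $\VV$-endofunctor $\HE \circ \TE : \CCE \to \CCE$ to conclude that $\HE \circ \TE$ is algebraically compact.

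There is no real obstacle in this argument; the only point needing (minor) care is the verification that $\VV$-functoriality is preserved under composition with $\HE$ and under the whiskering of $\alpha$ by $\HE$, so that the new factorisation lives in the enriched setting and not merely in the underlying one. This is routine, but it is exactly why the hypothesis asks for $\HE$ to be a $\VV$-functor rather than an arbitrary functor.
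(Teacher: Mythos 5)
Your proof is correct and is essentially the paper's own argument: the paper simply observes that $\TE \cong \GE \circ \FE$ gives $\HE \circ \TE \cong (\HE \circ \GE) \circ \FE$, which is the same absorption of $\HE$ into the right-hand $\VV$-functor, followed by Theorem~\ref{thm:factor}. Your additional remarks on whiskering and $\VV$-functoriality just make explicit what the paper leaves implicit.
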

\begin{proof}
If $\TE \cong \GE \circ \FE$, then $\HE \circ \TE \cong (\HE \circ \GE) \circ \FE$.
\end{proof}

\section{Constructive Classes of Algebraically Compact Functors}\label{sec:construct}

\begin{assumption}
Throughout the rest of the section, we assume we are given the following data. A $\VV$-category $\CCE$,
a $\VV$-algebraically compact category $\DDE$ together with $\VV$-functors $\FE : \CCE \to \DDE$ and $\GE : \DDE \to \CCE$ and a
$\VV$-endofunctor $\TE \cong \GE \circ \FE.$
\end{assumption}

Consider the following grammar:
\begin{equation}\label{eq:grammar}
A, B ::= \TE X\ |\ \HE(A_1, \ldots, A_n)\
\end{equation}
where $X$ is simply a type variable, $n$ ranges over the natural numbers (including zero) and $\HE$ ranges over $\VV$-functors $\HE : \CCE^n \to \CCE$.
Every such type expression induces a $\VV$-endofunctor $\lrb{X \vdash A} : \CCE \to \CCE,$ defined by:
\begin{align*}
\lrb{X \vdash \TE X}                 &= \TE   \\
\lrb{X \vdash \HE(A_1, \ldots, A_n)} &= \HE \circ \langle \lrb{X \vdash A_1}, \ldots, \lrb{X \vdash A_n} \rangle
\end{align*}

\begin{remark}
Since the base of enrichment $\VV$ is cartesian, tuples of $\VV$-functors, as above, are also $\VV$-functors and the above assignment is well-defined.
Also, $\VV$-algebraically compact categories have been studied only for cartesian $\VV$.
Because of these two reasons, Assumption~\ref{ass:enrich} cannot be relaxed to a symmetric monoidal closed $\VV$.
\end{remark}

\begin{theorem}\label{thm:compact-big}
Any functor $\lrb{ X \vdash A } : \CCE \to \CCE$ factors through $\FE$ and is therefore algebraically compact.
\end{theorem}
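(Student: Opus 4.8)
The plan is to proceed by structural induction on the type expression $A$ generated by the grammar~\eqref{eq:grammar}. The property "is algebraically compact" is by itself too weak to push through the induction, since (as noted just before Proposition~\ref{prop:compose}) algebraically compact functors are not closed under composition. Instead I would carry the stronger statement that the theorem actually asserts: for every $A$ there is a $\VV$-functor $\GE_A : \DDE \to \CCE$ with $\lrb{X \vdash A} \cong \GE_A \circ \FE$. Algebraic compactness is then immediate, because such a factorisation is precisely a $\VV$-algebraically compact factorisation in the sense of Definition~\ref{def:factor} (using the ambient $\DDE$), so Theorem~\ref{thm:factor} applies.

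For the base case $A = \TE X$ we have $\lrb{X \vdash \TE X} = \TE \cong \GE \circ \FE$ by assumption, so we may take $\GE_{\TE X} := \GE$. For the inductive step $A = \HE(A_1, \ldots, A_n)$, the induction hypothesis supplies $\VV$-functors $\GE_{A_i} : \DDE \to \CCE$ with $\lrb{X \vdash A_i} \cong \GE_{A_i} \circ \FE$ for each $i$. Since $\VV$ is cartesian, a finite tuple of $\VV$-functors is again a $\VV$-functor, and tupling commutes with precomposition: $\langle \GE_{A_1} \circ \FE, \ldots, \GE_{A_n} \circ \FE \rangle = \langle \GE_{A_1}, \ldots, \GE_{A_n}\rangle \circ \FE$. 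Whiskering the componentwise isomorphisms and then postcomposing with $\HE$ therefore gives
\[
\lrb{X \vdash A} \;=\; \HE \circ \langle \lrb{X \vdash A_1}, \ldots, \lrb{X \vdash A_n}\rangle \;\cong\; \HE \circ \langle \GE_{A_1}, \ldots, \GE_{A_n}\rangle \circ \FE .
\]
Setting $\GE_A := \HE \circ \langle \GE_{A_1}, \ldots, \GE_{A_n}\rangle : \DDE \to \CCE$ completes the step, and the final sentence of the theorem follows from Theorem~\ref{thm:factor}.

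The only genuinely delicate point is selecting the right induction hypothesis: the compositional invariant is "factors through $\FE$ into the fixed $\VV$-algebraically compact category $\DDE$", not bare algebraic compactness. Everything else is routine bookkeeping with $\VV$-enriched structure — that composites and finite tuples of $\VV$-functors are $\VV$-functors, and that $\VV$-natural isomorphisms are stable under whiskering on either side — and the inductive step is in effect an iterated application of the argument already used in Proposition~\ref{prop:compose}.
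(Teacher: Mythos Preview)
Your proof is correct and follows exactly the paper's approach: structural induction with the hypothesis that $\lrb{X \vdash A}$ factors (up to isomorphism) through $\FE$, handling the base case via the assumed factorisation $\TE \cong \GE \circ \FE$ and the step case via the identity $\langle \GE_{A_1} \circ \FE, \ldots, \GE_{A_n} \circ \FE \rangle = \langle \GE_{A_1}, \ldots, \GE_{A_n}\rangle \circ \FE$, then concluding by Theorem~\ref{thm:factor}. Your version is simply more explicit about the bookkeeping (naming the $\GE_A$, noting the enriched-categorical facts being used) than the paper's terse presentation.
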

\begin{proof}
By induction. For the base case we have $\TE \cong \GE \circ \FE$. The step case is given by
\begin{align*}
\lrb{X \vdash \HE(A_1, \ldots, A_n)} &= \HE \circ \langle \lrb{X \vdash A_1}, \ldots, \lrb{X \vdash A_n} \rangle \\
  & \cong \HE \circ \langle \GE_1 \circ \FE, \ldots, \GE_n \circ \FE \rangle \\
  & = \HE \circ \langle \GE_1, \ldots, \GE_n \rangle \circ \FE,
\end{align*}
for some $\VV$-functors $\GE_i : \DDE \to \CCE$.
\end{proof}

\begin{example}
The $\VV$-functor $\TE$ is algebraically compact.
\end{example}

\begin{example}\label{ex:constant}
Any constant functor $K_c : \CC \to \CC$ is, of course, algebraically compact. This is captured by our theorem, because $K_c$ is the underlying functor of the constant $c$ $\VV$-endofunctor $\KE_c : \CCE \to \CCE$, which may be constructed using our grammar.
\end{example}

\begin{example}
If $\BBE_1 : \CCE \times \CCE \to \CCE$ and $\BBE_2 : \CCE \times \CCE \to \CCE$ are two $\VV$-bifunctors, and $\EE : \CCE \to \CCE$ is a $\VV$-endofunctor, then the endofunctors $\EE \circ \TE$ and $\BBE_1 \circ \langle \TE, \TE \rangle$
and $\BBE_2 \circ \langle \EE \circ \TE, \BBE_1 \circ \langle \TE, \TE \rangle \rangle$ are algebraically compact (among many other combinations).
\end{example}

\subsection{Special Case: Models of Mixed Linear/Non-linear Lambda Calculi}\label{sub:covariant-dill}

As a special case, our development can be applied to models of mixed linear/non-linear lambda calculi with recursive types, as we shall now explain.

In a $\cpo$-category, an \emph{embedding-projection pair} is a pair of morphisms $(e, p)$, such that $e \circ p \leq \id$ and $p \circ e = \id$.
The morphism $e$ is called an \emph{embedding} and the morphism $p$ a \emph{projection}. An \emph{e-initial object} is an initial object $0$, such that every initial map with it as source is an embedding.

\begin{definition}\label{def:model}
A model of the linear/nonlinear fixpoint calculus (LNL-FPC)~\cite{lnl-fpc} is given by the following data:
\begin{enumerate}
\item[1.] A $\cpo$-symmetric monoidal closed category $\DD$ with finite $\cpo$-coproducts, such that
$\DD$ has an e-initial object and all $\omega$-colimits over embeddings;
\item[2.] A $\cpo$-symmetric monoidal adjunction $\stikz{lnl-fpc-model.tikz}$.
\end{enumerate}
\end{definition}

In the above situation, the category $\DD$ is necessarily $\cpo$-algebraically compact, so it is an ideal setting for constructing compact algebras of $\cpo$-functors.
We will now show that the monad $T$ of this adjunction also induces a large class of algebraically compact functors on $\cpo$ (which is not $\cpo$-algebraically compact). But first, two examples of the above situation.

\begin{example}
The adjunction
\stikz{clnl-corollary.tikz},
where the left adjoint is given by domain-theoretic lifting and the right adjoint is the forgetful functor, has the required structure.
The induced monad $T : \cpo \to \cpo$ is called \emph{lifting} (see Corollary~\ref{cor:cpo}). This adjunction is in fact a computationally adequate model of LNL-FPC~\cite{lnl-fpc}.
\end{example}

\begin{example}
Let $\M$ be a small $\cpobs$-symmetric monoidal category and 
let $\widehat \M = [\M^\op, \cpobs]$ be the indicated $\cpobs$-functor category.
There exists an adjunction
\stikz{copower.tikz}, where the left adjoint is the $\cpobs$-copower with the tensor unit $I$ and the right adjoint is the representable functor (see~\cite[\S 6]{borceux:handbook2}).
Composing the two adjunctions
\stikz{big-model.tikz}
yields a LNL-FPC model.
By making suitable choices for $\M$, this data also becomes a model of Proto-Quipper-M, a quantum programming language~\cite{pqm-small} and also a model of ECLNL, a programming language for string diagrams~\cite{eclnl,eclnl2}.
\end{example}

Since $\DD$ is $\cpo$-algebraically compact, we can now construct a large class of algebraically compact functors via Theorem~\ref{thm:compact-big}.
For instance, such a subclass is given by the following corollary.
\begin{corollary}
Any endofunctor on $\cpo$ constructed using constants, $T$, $\times$ and $+,$ and such that all occurrences of the functorial variable in its definition are surrounded by $T$, is algebraically compact.
\end{corollary}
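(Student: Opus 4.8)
The plan is to exhibit every endofunctor of the stated form as the underlying functor of a type expression of grammar~\eqref{eq:grammar} and then to appeal to Theorem~\ref{thm:compact-big}. First I would instantiate the standing assumption of this section with $\VV = \cpo$: the $\VV$-category $\CCE$ is $\cpo$ itself (enriched by the pointwise order on functions, as agreed above), the $\VV$-algebraically compact category is the category $\DD$ of Definition~\ref{def:model} (which is $\cpo$-algebraically compact, as noted), the $\VV$-functors $\FE, \GE$ are the two halves of the symmetric monoidal adjunction, and $\TE = \GE \circ \FE$ is the $\cpo$-endofunctor underlying the monad $T$. Next I would record that each basic building block is a $\cpo$-functor of the appropriate arity, so that it is a legitimate value for the symbol $\HE$ occurring in~\eqref{eq:grammar}: a constant $K_c$ is the underlying functor of a $\cpo$-functor $\CCE^0 \to \CCE$ (this is exactly Example~\ref{ex:constant}); $\times$ and $+$ are $\cpo$-bifunctors $\CCE^2 \to \CCE$, since the product and coproduct of $\cpo$ act Scott-continuously on hom-cpo's; and $T = \GE \circ \FE$ is a $\cpo$-endofunctor, being a composite of $\cpo$-functors.

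Then I would translate a defining expression $A$ for such an endofunctor into a type expression $\widehat{A}$ of grammar~\eqref{eq:grammar} by structural recursion: a constant subexpression $c$ becomes $\HE()$ with $\HE = K_c$; a subexpression $B_1 \times B_2$ (respectively $B_1 + B_2$) becomes $\HE(\widehat{B_1}, \widehat{B_2})$ with $\HE = \times$ (respectively $\HE = +$); a subexpression $T(B)$ in which $B$ is not a bare variable becomes $\HE(\widehat{B})$ with $\HE = T$; and a subexpression of the exact shape $T(X)$ becomes the base-case production $\TE X$. A routine induction on $A$ then shows that the underlying functor of the $\cpo$-endofunctor $\lrb{X \vdash \widehat{A}}$ is (isomorphic to) the endofunctor defined by $A$. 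Theorem~\ref{thm:compact-big} tells us that $\lrb{X \vdash \widehat{A}}$ factors through $\FE$ and is algebraically compact; since algebraic compactness of a $\cpo$-functor means algebraic compactness of its underlying functor and is invariant under natural isomorphism (as used in the proof of Theorem~\ref{thm:factor}), the given endofunctor is algebraically compact, which is what we want.

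The argument is essentially bookkeeping, so I do not anticipate a genuine obstacle; the one delicate point is the last clause of the translation. Grammar~\eqref{eq:grammar} introduces the functorial variable $X$ only through the leaf $\TE X$, so there is no way to form a type expression for an endofunctor in which $X$ occurs bare, that is, not enclosed by $T$. Checking that the hypothesis ``every occurrence of $X$ is surrounded by $T$'' is precisely what guarantees that the recursion above is total — every bare occurrence of $X$ that it meets sits immediately below a $T$ and is absorbed into the base case $\TE X$ — is the crux, and it is also the exact reason that the syntactic restriction in the statement cannot be dropped.
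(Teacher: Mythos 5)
Your proof is correct and follows exactly the route the paper intends: the corollary is stated there as an immediate specialisation of grammar~\eqref{eq:grammar} and Theorem~\ref{thm:compact-big} (see the remark following it), and your translation of functorial expressions into type expressions, with the observation that the ``every occurrence of $X$ sits immediately under $T$'' hypothesis is what makes the translation total, is precisely the bookkeeping the paper leaves implicit.
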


\begin{remark}
To make this more precise, one should specify a formal grammar like~\eqref{eq:grammar} to indicate the admissible functorial expressions, but it should be clear that~\eqref{eq:grammar} can be easily specialised to handle this.
\end{remark}

Next, let us consider some example endofunctors on $\cpo$.

\begin{example}\label{ex:plus}
The endofunctor $H(X) = TX + TX $ is algebraically compact. Indeed, observe that $H = + \circ \langle T, T \rangle = \lrb{X \vdash TX + TX}.$
\end{example}

\begin{example}
The endofunctor $H(X) = TX + T(TX \times TX)$ is algebraically compact. To see it, observe that
$H = + \circ \langle T, T \circ \times \circ \langle T, T \rangle \rangle = \lrb{X \vdash TX + T(TX \times TX)}.$
\end{example}

\begin{nonexample}\label{nonexample:not-compact}
The endofunctor $H(X) = X \times TX$ is not algebraically compact (its initial algebra is $\varnothing \times T\varnothing = \varnothing \xrightarrow \id \varnothing$ ). Our results do not apply to it, because the left occurrence of $X$ does not have $T$ applied to it.
For the same reason, the identity functor $\Id(X) = X$ is also not algebraically compact and not covered by our development.
\end{nonexample}

\section{Algebraically Compact Mixed-Variance Functors}
As mentioned in the introduction, algebraic compactness allows us to model
recursive datatypes which include mixed-variance functors such as function
space. In this section we show that our methods are also compatible with
recursive datatypes.

Consider a mixed-variance bifunctor $H: \CC^\op \times \CC \to \CC$. Since
$H$ is not an endofunctor, then clearly we cannot talk about $H$-algebras
or $H$-coalgebras. A more appropriate notion is that of a $H$-\emph{dialgebra},
which we will not introduce here, because of a lack of space and because the
category of $H$-dialgebras is isomorphic to the category of $\cupp H$-algebras
~\cite[\S 4]{freyd2}, where
\[ \cupp H = \langle H^\op \circ \langle \Pi_2, \Pi_1 \rangle, H \rangle : \CC^\op \times \CC \to \CC^\op \times \CC. \]
Because of this, it is standard to model recursive datatypes as endofunctors $\cupp H: \CC^\op \times \CC \to \CC^\op \times \CC$~\cite{fiore-plotkin}.

If a category $\DDE$ is $\VV$-algebraically complete, then $\DDE^\op$ is
$\VV$-algebraically cocomplete and vice versa.  Thus, $\VV$-algebraic
compactness is a self-dual notion.  Unlike the previous sections, the results
presented here do not hold for algebraically complete or cocomplete functors
and categories.

If a category $\DDE$ is $\VV$-algebraically compact in a
parameterised sense, then so is $\DDE^\op \times \DDE.$ We
omit the details of parameterised algebraic compactness, but the interested
reader may consult~\cite{fiore-plotkin}. We point out that the notions of
$\cpo$-algebraic compactness and parameterised $\cpo$-algebraic compactness
coincide~\cite[Corollary 7.2.5]{fiore-thesis} and we shall consider such a
$\cpo$-example to illustrate our methods.
But we emphasise that our methods can be adapted to the general setting
of a parameterised $\VV$-algebraically compact category $\DDE$.

Let us assume we are given an LNL-FPC model $\stikz{lnl-fpc-model.tikz}$ as in Subsection~\ref{sub:covariant-dill} with $T = G \circ F$.
In this situation, the category $\DD^\op \times \DD$ is also $\cpo$-algebraically compact and we can thus reuse Theorem~\ref{thm:factor} and Proposition~\ref{prop:compose},
where we choose the $\cpo$-algebraically compact factorisation $T^\op \times T = (G^\op \times G) \circ (F^\op \times F)$.

Consider the following grammar:
\begin{equation}\label{eq:mixed-grammar}
A, B ::= c\ |\ T X\ |\ H A\ |\ A+B\ |\ A \times B\ |\ A \to B,
\end{equation}
where $c$ ranges over the objects of $\cpo$ and $H$ ranges over $\cpo$-endofunctors on $\cpo$.
Every such type expression induces a $\cpo$-endofunctor $\lrb{X \vdash A} : \cpo^\op \times \cpo \to \cpo^\op \times \cpo,$ defined by:
\begin{align*}
\lrb{X \vdash T X}                 &= T^\op \times T   \\
\lrb{X \vdash c}                   &= K_{(c,c)}        \\
\lrb{X \vdash H A}                 &= (H^\op \times H) \circ \lrb{X \vdash A} \\
\lrb{X \vdash A+B} &= \left( + \circ \langle \Pi_2 \lrb{X \vdash A}, \Pi_2 \lrb{X \vdash B} \rangle \right)^\smallsmile \\
\lrb{X \vdash A \times B} &= \left( \times \circ \langle \Pi_2 \lrb{X \vdash A}, \Pi_2 \lrb{X \vdash B} \rangle \right)^\smallsmile \\
\lrb{X \vdash A \to B} &= \left( [ - \to - ] \circ \langle \Pi_1 \lrb{X \vdash A}, \Pi_2 \lrb{X \vdash B} \rangle \right)^\smallsmile,
\end{align*}
where $K_{(c,c)}$ is the constant $(c,c)$ endofunctor on $\cpo^\op \times \cpo$ and $[- \to - ] : \cpo^\op \times \cpo \to \cpo$ is the internal-hom.
\begin{remark}
The last three cases in the above assignment are essentially the same as the standard interpretation of types within FPC~\cite[Definition 6.2]{fiore-plotkin}.
\end{remark}

\begin{theorem}
Every functor $\lrb{X \vdash A} : \cpo^\op \times \cpo \to \cpo^\op \times \cpo$ factors through $F^\op \times F$ and is therefore algebraically compact.
\end{theorem}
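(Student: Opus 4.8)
The plan is to mimic the proof of Theorem~\ref{thm:compact-big}, proceeding by structural induction on the type expression $A$ and showing that each induced functor $\lrb{X \vdash A} : \cpo^\op \times \cpo \to \cpo^\op \times \cpo$ factors as $\GE_A \circ (F^\op \times F)$ for some $\cpo$-functor $\GE_A : \DD^\op \times \DD \to \cpo^\op \times \cpo$. Once this factorisation is established, algebraic compactness follows immediately: since $\DD^\op \times \DD$ is $\cpo$-algebraically compact (by the coincidence of parameterised and ordinary $\cpo$-algebraic compactness cited from \cite{fiore-thesis}, applied to the LNL-FPC model), Theorem~\ref{thm:factor} — or more directly Theorem~\ref{thm:reflect} together with Proposition~\ref{prop:compose} — gives the conclusion.

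For the base cases I would handle $TX$ and $c$ directly. For $TX$, we have $\lrb{X \vdash TX} = T^\op \times T = (G^\op \times G) \circ (F^\op \times F)$ by the chosen factorisation $T = G \circ F$, so we may take $\GE_{TX} = G^\op \times G$. For $c$, the constant functor $K_{(c,c)}$ factors trivially through anything, so $\GE_c$ is the constant $(c,c)$ functor $\DD^\op \times \DD \to \cpo^\op \times \cpo$. For the inductive step, suppose $\lrb{X \vdash A} \cong \GE_A \circ (F^\op \times F)$ and similarly for $B$. The case $HA$ is immediate: $(H^\op \times H) \circ \lrb{X \vdash A} \cong \big((H^\op \times H) \circ \GE_A\big) \circ (F^\op \times F)$, so $\GE_{HA} = (H^\op \times H) \circ \GE_A$. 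For the binary connectives $A+B$, $A \times B$ and $A \to B$, each interpretation is built by post-composing a fixed $\cpo$-functor (the ``cup'' of $+$, $\times$, or $[-\to-]$ applied to appropriate projections of $\lrb{X\vdash A}$ and $\lrb{X\vdash B}$) with the pair $\langle \lrb{X\vdash A}, \lrb{X\vdash B}\rangle : \cpo^\op \times \cpo \to (\cpo^\op \times \cpo)^2$. Since both components factor through $F^\op \times F$, so does their tuple (using that $\cpo$ is cartesian, so tupling of $\cpo$-functors is again a $\cpo$-functor), and then the fixed post-composed functor can be absorbed into $\GE$. Concretely, for $A+B$ one takes $\GE_{A+B} = \big(+ \circ \langle \Pi_2 \GE_A, \Pi_2 \GE_B\rangle\big)^\smallsmile$, and analogously for the other two, with $\Pi_1$ used on the $A$-component in the function-space case to account for the contravariance.

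The one point requiring slightly more care — and what I expect to be the main obstacle — is verifying that the ``cup'' operation $(-)^\smallsmile$ and the projections $\Pi_1, \Pi_2$ interact correctly with the factorisation, i.e., that $\big(K \circ \langle \Pi_i \lrb{X\vdash A}, \Pi_j \lrb{X\vdash B}\rangle\big)^\smallsmile$ genuinely equals $\GE_{A\star B} \circ (F^\op \times F)$ and not merely something isomorphic to it on the underlying categories but ill-typed as a $\cpo$-functor. This is really a bookkeeping check: the cup construction $\cupp{H} = \langle H^\op \circ \langle \Pi_2, \Pi_1\rangle, H\rangle$ is functorial in $H$ and commutes with pre-composition by $F^\op \times F$ because $(F^\op \times F)$ is itself of the form (something)$^\op \times$(something), so $\op$ passes through it cleanly; and $\Pi_i$ composed with $F^\op \times F$ is $F^{(\op)} \circ \Pi_i$, which keeps the induction hypothesis applicable componentwise. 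Spelling this out is routine but fiddly, so in the write-up I would state the factorisation claim as the inductive invariant, verify the base cases, and then for the step cases simply exhibit the witnessing $\GE$ in each of the five clauses, leaving the $\op$-commutation verifications as immediate. Finally, having the factorisation through $F^\op \times F$ in hand, Theorem~\ref{thm:factor} applied with $\DDE := \DD^\op \times \DD$ delivers algebraic compactness of $\lrb{X \vdash A}$, completing the proof.
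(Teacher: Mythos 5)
Your proof is correct and follows essentially the same route as the paper: structural induction with the base and $HA$ cases immediate, and the binary cases handled by the key identity $(H \circ (F^\op \times F))^\smallsmile = \cupp H \circ (F^\op \times F)$, which is exactly the ``cup commutes with precomposition by $F^\op \times F$'' fact you isolate (the paper justifies it by noting that $(-)^\op$ is covariant with respect to composition, which is the precise form of your ``$\op$ passes through cleanly'' remark). Your explicit witnesses $\GE_A$ and the appeal to Theorem~\ref{thm:factor} with $\DDE := \DD^\op \times \DD$ match the paper's argument, just spelled out in more detail.
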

\begin{proof}
Simple proof by induction. The first three cases are obvious. For the last three cases, simply use the fact that $(H \circ (F^\op \times F))^\smallsmile = \cupp H \circ (F^\op \times F),$
which can be proved after recognising that $(-)^\op$ is a \emph{covariant} operation with respect to functor composition.
\end{proof}

\begin{example}
Consider the functor $H(X, Y) = [TX \to TY] : \cpo^\op \times \cpo \to \cpo$.
Then the functor $\cupp H : \cpo^\op \times \cpo \to \cpo^\op \times \cpo$ is algebraically compact,
because:
\begin{align*}
 \cupp H = ([- \to - ] \circ (T^\op \times T))^\smallsmile &= ([ - \to - ] \circ \langle \Pi_1, \Pi_2 \rangle \circ (T^\op \times T))^\smallsmile \\
 &= \left( [ - \to - ] \circ \langle \Pi_1 \circ (T^\op \times T), \Pi_2 \circ (T^\op \times T) \rangle \right)^\smallsmile \\
 &= \lrb{X \vdash TX \to TX}.
\end{align*}
\end{example}

\begin{nonexample}
Consider the internal-hom functor $[- \to - ] : \cpo^\op \times \cpo \to \cpo$.
Then $[ - \cupp \to - ]$ is not algebraically compact, because its initial algebra is given by
\[ \left( [ - \cupp \to -] (1, \varnothing) = ([ \varnothing \to 1 ] , [1 \to \varnothing] ) = (1, \varnothing) \right) \xrightarrow{\id} (1, \varnothing), \]
which is not its final coalgebra.
Our results do not apply to $[ - \cupp \to -]$, because $T$ does not occur anywhere in its definition.
\end{nonexample}

\section{Comparison with Limit-Colimit Coincidence Results}

The focus in this paper is to study algebraically compact endofunctors on
categories which do not necessarily have a zero object. In~\cite{barr} Michael Barr
considers this situation and he
presents a more general version of the standard limit-colimit coincidence theorem~\cite{smyth-plotkin}.
The increased generality allows him to establish the existence of algebraically compact endofunctors on categories that do not have a zero object.
In this section, we will compare his results about $\cpo$-categories with ours.

\begin{theorem}[{\cite[Theorem 5.4]{barr}}]\label{thm:barr}
Let $\CC$ be a $\cpo$-category with initial object $\varnothing$ and terminal object $1$. Assume further $\CC$ has colimits of initial sequences of $\cpo$-endofunctors.
Then the class of $\cpo$-endofunctors for which there is a morphism $l : 1 \to H\varnothing$ such that
$\left( H1 \xrightarrow{} 1 \xrightarrow l H\varnothing \xrightarrow{Hh} H1 \right) \leq \id_{H1},$ where $h: \varnothing \to 1$ is the unique arrow, is algebraically compact.
\end{theorem}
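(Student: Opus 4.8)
The plan is to adapt the classical limit--colimit coincidence argument, the key new ingredient being that the morphism $l$ equips the \emph{terminal} sequence of $H$ with embedding--projection structure even though $\CC$ has no zero object. Write $d_0 : \varnothing \to H\varnothing$ for the unique arrow and $d_{n+1} = Hd_n$, so that $\varnothing \xrightarrow{d_0} H\varnothing \xrightarrow{d_1} H^2\varnothing \to \cdots$ is the initial sequence of $H$; write $q_0 : H1 \to 1$ for the unique arrow and $q_{n+1} = Hq_n$, so that $\cdots \xrightarrow{q_1} H1 \xrightarrow{q_0} 1$ is the terminal sequence. First I would make the terminal sequence into an $\omega$-chain of embedding--projection pairs: put $r_0 = \bigl(1 \xrightarrow{l} H\varnothing \xrightarrow{Hh} H1\bigr)$ and $r_{n+1} = Hr_n$. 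Terminality of $1$ gives $q_0 \circ r_0 = \id_1$, and the displayed hypothesis is precisely $r_0 \circ q_0 \leq \id_{H1}$; since $H$ is a $\cpo$-functor, hence locally monotone, applying $H$ repeatedly shows $(r_n, q_n)$ is an embedding--projection pair for every $n$.

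Second, I would interleave the two sequences. Set $\phi_0 = h : \varnothing \to 1$, $\psi_0 = l : 1 \to H\varnothing$, and $\phi_{n+1} = H\phi_n$, $\psi_{n+1} = H\psi_n$, obtaining an $\omega$-chain $\varnothing \xrightarrow{h} 1 \xrightarrow{l} H\varnothing \xrightarrow{Hh} H1 \xrightarrow{Hl} H^2\varnothing \to \cdots$. Because $\varnothing$ is initial, $\psi_0 \circ \phi_0 = l \circ h = d_0$, and applying $H$ gives $\psi_n \circ \phi_n = d_n$ for all $n$; dually $\phi_{n+1} \circ \psi_n = r_n$ for all $n$. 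Hence the even positions of this chain recover the initial sequence of $H$ and the odd positions recover the embedding chain $(H^n 1, r_n)$, and both subchains are cofinal in the interleaved chain. Since $\CC$ has colimits of initial sequences of $\cpo$-endofunctors, $\colim_n H^n \varnothing$ exists; by cofinality it is also a colimit of $(H^n 1, r_n)$, and we denote it $L$.

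Third, I would invoke the limit--colimit coincidence for $\cpo$-categories~\cite{smyth-plotkin}: as $L$ is the colimit in $\CC$ of the $\omega$-chain of embeddings $(H^n 1, r_n)$, it is simultaneously the limit of the $\omega^{\op}$-chain of projections $(H^n 1, q_n)$, with each $(\alpha_n, \alpha_n^{\#})$ --- $\alpha_n : H^n 1 \to L$ the colimiting cocone, $\alpha_n^{\#} : L \to H^n 1$ the limiting cone --- an embedding--projection pair satisfying $\bigvee_n \alpha_n \circ \alpha_n^{\#} = \id_L$. Local continuity of $H$ makes $H$ preserve this bilimit, and since the shifted chain $(H^{n+1}1, r_{n+1}, q_{n+1})$ has the same bilimit $L$, we get an isomorphism $\alpha : HL \xrightarrow{\sim} L$. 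As the initial sequence of $H$ is also cofinal in the interleaved chain, $H$ likewise preserves $\colim_n H^n\varnothing = L$, so Adámek's theorem~\cite{adamek-original} identifies $(L, \alpha)$ as an initial $H$-algebra; dually, $(L, \alpha^{-1})$ is a final $H$-coalgebra, the equation $\bigvee_n \alpha_n \circ \alpha_n^{\#} = \id_L$ being exactly what forces the initial-algebra and final-coalgebra structure maps to be mutually inverse. Hence $H$ is algebraically compact.

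I expect the main obstacle to be the bookkeeping around interleaving and cofinality: specifically, showing cleanly that the colimit of the initial sequence --- which is \emph{not} a colimit of embeddings --- coincides with the bilimit of an embedding--projection chain and is preserved by $H$, together with a careful application of the limit--colimit coincidence theorem without a zero object, which is exactly the point where Barr's generalisation of the classical result is needed.
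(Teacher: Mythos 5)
The paper itself contains no proof of this statement---it is imported verbatim as Barr's Theorem 5.4 \cite{barr}---so your proposal can only be compared with the literature, and judged that way it is correct and follows essentially the intended route: Barr's result is a generalised limit--colimit coincidence, and your reconstruction supplies exactly the missing zero-object ingredient by using $l$ to make $r_0 = Hh \circ l : 1 \to H1$ an embedding with projection $q_0$, hence (by local monotonicity) turning the terminal sequence $(H^n 1, q_n)$ into an embedding--projection chain; interleaving it with the initial sequence and using finality of both subchains is a clean way to let the assumed colimit of the initial sequence serve simultaneously as the bilimit of the embedding chain, after which local continuity of $H$, the Smyth--Plotkin coincidence \cite{smyth-plotkin}, and Ad\'amek's theorem \cite{adamek-original} with its dual deliver the compact algebra. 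Two steps are compressed and should be spelled out in a full write-up: (i) the claim that $H$ preserves $\colim_n H^n\varnothing$ needs the explicit finality argument you hint at ($H$ of the interleaved colimiting cocone restricts to a colimiting cocone over the embedding subchain because $H$ preserves O-colimits, hence is colimiting over the whole shifted chain, hence over the shifted initial sequence); and (ii) ``dually, $(L,\alpha^{-1})$ is a final coalgebra'' is not a formal dualisation of your hypotheses, since limits of terminal sequences are not assumed---the point is that your projection chain \emph{is} the terminal sequence, the coincidence theorem manufactures its limit $L$ and its preservation by $H$, and then the dual of Ad\'amek applies, together with the routine check that the resulting structure map is $\alpha^{-1}$ (equivalently, a direct uniqueness argument from $\bigvee_n \alpha_n \circ \alpha_n^{\#} = \id_L$). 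With those details filled in, the argument is complete and matches the spirit of Barr's original proof.
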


First, a necessary condition in the above situation.

\begin{proposition}
In the situation of Theorem~\ref{thm:barr}, the hom-cpo $\CC(H1, H1)$ is pointed.
\end{proposition}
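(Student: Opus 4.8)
The plan is to exhibit the least element of $\CC(H1,H1)$ explicitly; the hypothesis of Theorem~\ref{thm:barr} essentially hands us the candidate. I would set
\[
\bot \;:=\; \bigl( H1 \xrightarrow{\;!\;} 1 \xrightarrow{\;l\;} H\varnothing \xrightarrow{\;Hh\;} H1 \bigr) \;=\; Hh \circ l \circ \, ! ,
\]
which is a morphism $H1 \to H1$, and for which the assumption of Theorem~\ref{thm:barr} already records $\bot \leq \id_{H1}$.

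The argument then rests on two elementary observations. First, since $\CC$ is $\cpo$-enriched, composition is monotone in each variable, so an inequality between parallel morphisms survives pre- and post-composition. Second, $1$ is terminal, so the unique arrow $! : H1 \to 1$ absorbs precomposition, i.e.\ $! \circ g = \, !$ for every $g : H1 \to H1$. Putting these together: for an arbitrary $f \in \CC(H1,H1)$ I would precompose $\bot \leq \id_{H1}$ with $f$ to get $\bot \circ f \leq f$, and then compute
\[
\bot \circ f \;=\; Hh \circ l \circ (\, ! \circ f) \;=\; Hh \circ l \circ \, ! \;=\; \bot ,
\]
where the middle equality holds because $! \circ f$ and $!$ are both the unique arrow $H1 \to 1$. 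Hence $\bot \leq f$ for every $f$; since $\CC(H1,H1)$ is nonempty (it contains $\id_{H1}$), $\bot$ is its least element and the hom-cpo is pointed.

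I do not anticipate a genuine obstacle here: the only points needing care are reading the composite in Theorem~\ref{thm:barr} correctly as a morphism $H1 \to H1$ factoring through $1$ and $H\varnothing$, and invoking $\cpo$-enrichment to get monotonicity of composition — both routine. It is worth noting that the argument uses only the displayed inequality from the hypothesis; the colimit assumption of Theorem~\ref{thm:barr} plays no role in the pointedness of $\CC(H1,H1)$.
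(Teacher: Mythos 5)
Your proof is correct and follows essentially the same route as the paper: the paper also takes $\perp = Hh \circ l \circ\, !$ and computes $\perp = \perp \circ f \leq \id \circ f = f$, using terminality of $1$ and monotonicity of composition exactly as you do. Your write-up just spells out the intermediate steps more explicitly.
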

\begin{proof}
Let $\perp\ = \left( H1 \xrightarrow{} 1 \xrightarrow l H\varnothing \xrightarrow{Hh} H1 \right)$. Let $f: H1 \to H1$ be an arbitrary morphism. Then
\[ \perp\ =\ \perp \circ f  \leq \id \circ f = f. \]
\end{proof}

We may now see that Barr's theorem does not behave well when dealing with constant functors or with functors involving coproducts.

\begin{example}
Consider the constant functor $K_2 : \cpo \to \cpo$ where $2$ is any two point
cpo equipped with the discrete order. As we explained in
Example~\ref{ex:constant}, our development captures the fact that $K_2$ is
algebraically compact. However, Barr's theorem does not show this,
because $\cpo(2, 2)$ is not pointed.
\end{example}

\begin{example}
Consider the functor $H(X) = X_\perp + X_\perp : \cpo \to \cpo$ where $(-)_\perp$ is given by lifting.
Our development showed in
Example~\ref{ex:plus} that this functor is algebraically compact. However, Barr's theorem does not show this,
because $\cpo(1_\perp + 1_\perp, 1_\perp + 1_\perp)$ is not pointed.
\end{example}

A natural question to ask is whether there exists an algebraically compact
functor described by Theorem~\ref{thm:barr}, but not captured by the methods
presented here. We leave this for future work.

We also provided a compositionality result
(Proposition~\ref{prop:compose}) which then allowed us to present a
\emph{constructive} description of large classes of algebraically compact
functors (Section~\ref{sec:construct}). So far this has not been done with Barr's results.

\section{Related Work}

The solution of recursive domain equations is based on the construction of
initial algebras \cite{adamek-original,lehman-smyth} and on the construction of
compact algebras when mixed-variance functors are involved
\cite{fiore-thesis,fiore-plotkin,freyd2,freyd}.  The limit-colimit coincidence theorem
\cite{smyth-plotkin} for $\cpo$-enriched categories with sufficient structure
is perhaps the most common way of constructing such compact algebras.  In this
paper we have focused on the construction of compact algebras within categories
with little structure that does not admit utilising the above mentioned
approaches.  Our motivation for doing this is to consider denotational
interpretations of mixed linear/non-linear recursive types.

Another approach for modelling mixed linear/non-linear recursive types is
described in~\cite{lnl-fpc,lnl-fpc-lmcs} where the authors interpret non-linear
types within a carefully constructed subcategory of $\cpo$. That method works
only for $\cpo$-categories whereas the techniques presented here work
for arbitrary $\VV$-categories. Also, the set of type expressions that can be
interpreted with the methods from~\cite{lnl-fpc,lnl-fpc-lmcs} is incomparable with the one
presented here (neither is a subset of the other). However, the main idea in
\cite{lnl-fpc,lnl-fpc-lmcs} is to reflect the initial algebra structure from
certain (sub)categories and not necessarily the compact algebra structure.
This method has found further applications in constructing denotational models
for quantum programming \cite{qpl-tcs,qpl-fossacs} and for affine type systems
\cite{cmcs}.

\section{Conclusion}

We established new results about algebraically compact functors without relying
on limits, colimits or their coincidence. We arrived at these results in a more
abstract way by observing that any enriched endofunctor is algebraically
compact, provided that it factors through a category which is algebraically
compact in an enriched sense. This then allowed us to establish large classes
of algebraically compact functors which also admit a constructive description. Our results are
compositional and nicely complement other existing approaches in the literature
which do rely on a limit-colimit coincidence.

\paragraph{Acknowledgements.} The author wishes to thank the anonymous
reviewers for their feedback and he gratefully acknowledges financial support
from the French projects ANR-17-CE25-0009 SoftQPro and PIA-GDN/Quantex.

\bibliography{refs}

\end{document}